\newtheorem{lem}{Lemma}
\newtheorem{cor}{Corollary}
\newtheorem{thm}{Theorem}
\newtheorem{cl}{Claim}
\newtheorem{defn}{Definition}
\newtheorem{fact}{Fact}
\newtheorem{result}{Result}
 \newenvironment{customTheorem}[1]
   {\innercustomthm}
   {\endinnercustomthm}
\begin{document}

\title{Efficiency of Truthful and Symmetric Mechanisms in One-sided Matching\thanks{This work was partially supported by ERC StG project PAAl 259515, FET IP project MULTIPEX 317532, and NCN grant N N206 567940.}}

\author[1]{Marek Adamczyk}%\inst{1}%
\author[2]{Piotr Sankowski}%\inst{2}%\
\author[2]{Qiang Zhang}%\inst{2}
\affil[1]{Sapienza University of Rome, Italy\\ \tt{adamczyk@dis.uniroma1.it}}
\affil[2]{Institute of Informatics, University of Warsaw\\ \tt{sank,qzhang@mimuw.edu.pl}}

% \author{Marek Adamczyk \inst{1}
%  \and Piotr Sankowski \inst{2}
%  \and Qiang Zhang \inst{2}}

% \institute{Sapienza University of Rome, Italy \\ {\tt adamczyk@dis.uniroma1.it} \and Institute of Informatics, University of Warsaw, Poland \\ \tt{sank,qzhang@mimuw.edu.pl}}
\maketitle

\global\long\def\adj{\mbox{\footnotesize Adj}}

\global\long\def\br#1{\left( #1 \right)}

\global\long\def\brq#1{\left[ #1 \right]}

\global\long\def\brw#1{\left\{  #1\right\}  }

\global\long\def\cut#1{\partial#1 }

\global\long\def\br#1{\left( #1 \right)}

\global\long\def\brq#1{\left[ #1 \right]}

\global\long\def\chr#1{\mathbf{1}\brq{#1}}

\global\long\def\brw#1{\left\{  #1\right\}  }

\global\long\def\br#1{\left( #1 \right)}

\global\long\def\brq#1{\left[ #1 \right]}

\global\long\def\brw#1{\left\{  #1\right\}  }

\global\long\def\excond#1#2{\mathbb{E}\left[\left. #1 \right\vert #2 \right]}

\global\long\def\ex#1{\mathbb{E}\left[#1\right]}

\global\long\def\exls#1#2{\mathbb{E}_{#1}\left[#2\right]}

\global\long\def\pr#1{\mathbb{P}\left[ #1 \right]}

\global\long\def\prcond#1#2{\mathbb{P}\left[\left. #1 \right\vert #2 \right]}

\global\long\def\setst#1#2{\left\{  \left.#1\right|#2\right\}  }

\global\long\def\set#1{\left\{  #1\right\}  }

\global\long\def\ind#1{\mathbf{1}\left[ #1 \right]}

\global\long\def\st#1{[#1] }

\global\long\def\opstyle#1{\mathbb{#1}}

\global\long\def\ex#1{\mathbb{E}\left[#1\right]}

\global\long\def\xp#1{\mathbb{E}\left[#1\right]}

\global\long\def\excond#1#2{\opstyle E \left[\left. #1 \right\vert #2 \right]}

\global\long\def\exls#1#2{\opstyle{\opstyle E}_{#1}\left[ #2 \right]}

\global\long\def\size#1{\left|#1\right|}

\global\long\def\setst#1#2{\left\{  #1\left|#2\right.\right\}  }

\global\long\def\setst#1#2{\left\{  \left.#1\right|#2\right\}  }

\global\long\def\setstcol#1#2{\left\{  #1:#2\right\}  }

\global\long\def\set#1{\left\{  #1\right\}  }

\global\long\def\adj{\mbox{\footnotesize Adj}}

\global\long\def\st#1{[#1] }

\global\long\def\indi#1{\chi\brq{#1}}

\global\long\def\evalat#1#2{ #1 \Big|_{#2}}

\global\long\def\Df#1#2{\frac{\partial#1}{\partial#2}}

\global\long\def\ex#1{\mathbb{E}\left[#1\right]}

\global\long\def\xp#1{\mathbb{E}\left[#1\right]}

\global\long\def\prls#1#2{\opstyle P_{#1}\left[ #2 \right]}

\global\long\def\pr#1{\opstyle P \left[ #1 \right]}

\global\long\def\prcond#1#2{\opstyle P \left[\left. #1 \right\vert #2 \right]}

\global\long\def\pr#1{\opstyle P \left[ #1 \right]}

\global\long\def\prcond#1#2{\opstyle P \left[\left. #1 \right\vert #2 \right]}

\global\long\def\excond#1#2{\opstyle E \left[\left. #1 \right\vert #2 \right]}

\global\long\def\exls#1#2{\opstyle{\opstyle E}_{#1}\left[ #2 \right]}

\global\long\def\prls#1#2{\opstyle P_{#1}\left[ #2 \right]}

\global\long\def\size#1{\left|#1\right|}

\global\long\def\setst#1#2{\left\{  #1\left|#2\right.\right\}  }

\global\long\def\setst#1#2{\left\{  \left.#1\right|#2\right\}  }

\global\long\def\setstcol#1#2{\left\{  #1:#2\right\}  }

\global\long\def\set#1{\left\{  #1\right\}  }

\global\long\def\adj#1{\delta\br{#1}}

\global\long\def\st#1{[#1] }

\global\long\def\indi#1{\chi\brq{#1}}

\global\long\def\evalat#1#2{ #1 \Big|_{#2}}

\global\long\def\Df#1#2{\frac{\partial#1}{\partial#2}}

\global\long\def\eps{\varepsilon}

\global\long\def\rsd{\mathcal{R}}

\global\long\def\opt{\mathcal{O}}

\global\long\def\kvv{\mathcal{R}^*}

\global\long\def\profbrack#1{\left\langle{#1}\right\rangle}

\global\long\def\profi#1{\mathbb{I}_{#1}}

\global\long\def\vect#1{#1}

\global\long\def\probnamem{\mbox{one-sided matching }}

\global\long\def\probname{one-sided matching }

\global\long\def\sw#1{\nu\br{#1}}

\begin{abstract}
We study the efficiency (in terms of social welfare) of truthful and symmetric 
mechanisms in one-sided matching problems with {\em dichotomous preferences}
and {\em normalized von Neumann-Morgenstern preferences}. We are particularly
interested in the well-known {\em Random Serial Dictatorship} mechanism. For 
dichotomous preferences, we first show that truthful, symmetric and optimal 
mechanisms exist if intractable mechanisms are allowed. We then provide a 
connection to online bipartite matching. Using this connection, it is possible 
to design truthful, symmetric and tractable mechanisms that extract 
$0.69$ of the maximum social welfare, which works under assumption that
agents are not adversarial. Without this assumption, we show that Random Serial
Dictatorship always returns an assignment in which the expected social welfare
is at least a third of the maximum social welfare. For normalized von
Neumann-Morgenstern preferences, we show that Random Serial Dictatorship
always returns an assignment in which the expected social welfare is at
least $\frac{1}{e}\frac{\nu(\opt)^2}{n}$, where $\nu(\opt)$ is the maximum
social welfare and $n$ is the number of both agents and items. On the hardness
side, we show that no truthful mechanism can achieve a social welfare better
than $\frac{\nu(\opt)^2}{n}$.
\end{abstract}

\section{Introduction}

We study the efficiency of mechanisms in one-sided matching problems, where the
goal is to allocate $n$ {\em indivisible} items to $n$ unit-demand
{\em rational} agents having \textit{private} preferences over items. Agents
are rational, i.e., they would like to be assigned to the best items according
to their private preferences. The problem essentially captures variants of
practical applications such as allocating houses to residents, assigning
professors to courses and so on. In this paper, we mainly focus on 
{\em cardinal preferences} in which agents have values for different items.
A practical setting would be that residents have values for different houses.
A {\em mechanism} maps preferences that agents report to a matching, which
is a one-to-one mapping between agents and items. Throughout the paper,
depending on the context, we use sometimes term {\em matching} and sometimes
{\em assignment}, but they always mean essentially the same. One immediate
question arises: if there exist mechanisms in which no agent could benefit 
by deviating from reporting his true preference regardless the preferences 
reported by other agents? Such mechanisms are often called  {\em truthful}
mechanisms.
The question was answered in~\cite{svensson1999strategy}, where it was shown
that there exists only one {\em truthful, nonbossy} and {\em neutral}
mechanism. A mechanism is nonbossy if an individual agent cannot change 
the output of the mechanism without changing his assignment. A mechanism
is neutral if the mechanism is independent of the identities of items, e.g.,
the assignment get permuted accordingly when the items are permuted.
The unique mechanism works as follows. First, agents are sorted in a fixed 
order, and then the first agent chooses his favorite item, the next agent 
chooses his favorite item among remaining items, etc. When the fixed order 
is picked uniformly among all possible orderings, the resulted mechanism 
is called {\em Random Serial Dictatorship (RSD)}.

Besides the truthfulness, an important issue left is to understand
the efficiency of mechanisms in one-sided matching problems. The efficiency 
of a mechanism is defined as the social welfare of the assignment 
the mechanism returns. 
Zhou~\cite{zhou1990conjecture} confirmed Gale's conjecture by showing
that there is no \textit{symmetric, Pareto optimal} and \textit{truthful}
mechanism for general preferences. A mechanism is symmetric if agents are 
treated equally if they report the same preferences. A mechanism is Pareto
optimal if the mechanism never outputs an assignment that the social welfare 
could be improved without hurting any agent. It is well-known that RSD 
is truthful and ex post efficient, i.e., it never outputs Pareto dominated 
outcomes.

We observe that there is few work that study the
efficiency of RSD. The main reason is that its average
social welfare could be even $O\br n$ far away from the optimal
social welfare if the preferences of agents for items
are unrestricted. It happens when assigning a particular item to
a particular agent contributes most of the optimal social welfare. However,
in RSD it is possible that the agent only gets that item with a
probability of $1/n$. In this paper, we circumvent this problem by
considering smaller but still rich domains of preferences. %
\begin{comment}
This sentence is too harsh; in the normalized settings we get exactly
the mentioned result.
\end{comment}
The first type of preferences we consider is \textit{dichotomous
preferences}, where agents have binary preferences over items. We shall 
call this setting simply {\em dichotomous}. Dichotomous preferences are 
fairly natural in assignment problems. For example, professors indicate 
the courses they like or dislike to teach, or workers choose the working 
shifts they want. The goal here is to design good mechanisms to 
assign courses/shifts to professors/workers. One can model these 
problems with bipartite graphs: workers on one side, shifts on the other,
an edge indicates whether a worker wants to participate in a particular shift.
Then one can find a maximum matching in the graph to optimize the total 
value of the assignment. It is shown in~\cite{dughmi2010truthful} that 
with some careful tie-breaking rule, finding a maximum matching yields 
a truthful mechanism.
However, such mechanisms fail to capture the symmetry. 
To make this approach symmetric, one could find all maximum matchings 
and randomly choose one. Note that it implies that Zhou's impossibility 
result does not pertains to dichotomous preferences. However, since 
finding all maximum matchings in bipartite graphs is $\#P$-complete, 
we conjecture that it is computationally infeasible to design truthful 
and symmetric mechanisms that obtain optimal welfare. Therefore, we 
turn our attention to investigate how well mechanisms can approximate
the maximum social welfare. By the connection to the online bipartite
matching problem~\cite{KarpVV90,mahdian2011online}, we get the following
result: 

\begin{result}
In dichotomous setting there exists a truthful and symmetric mechanism
that is a $0.69$-approximation to the maximum social welfare.
\end{result}
Due to the limitation of such mechanisms, next we show that RSD also
obtains a constant approximation for dichotomous preferences. 

\begin{result}
Random Serial Dictatorship in dichotomous setting returns an assignment 
in which the expected social welfare is a 3-approximation of the maximum 
social welfare.
\end{result}

The second type of preferences we consider is \textit{normalized
von Neumann-Morgenstern preferences}, where the value of agent $i$
for item $j$ lies in $[0,1]$. We shall call this setting simply {\em normalized}.
In this setting our result gives asymptotically tight description of the social welfare achieved by RSD.

\begin{result}
In normalized setting with $n$ agents and $n$ items, Random Serial Dictatorship returns a matching which expected social welfare is at least $\frac{1}{e}\frac{\nu(\opt)^2}{n}$, where $\nu(\opt)$ is the maximum social welfare.
\end{result}
This result implies that RSD achieves an $\sqrt{e\cdot n}$-approximation
of the optimal social welfare in unit-range preferences, i.e., when
$\max_i v_a(i)=1$, $\min_i v_a(i)=0$.
Recently~\cite{FilosRatsikas} presented an $O\br{\sqrt{n}}$-approximation for RSD in unit-range setting.

Finally, we complement the above result with the following upper-bound.

\begin{result}\label{res:hardness}
Given $n$, for any $k=1,\ldots,n$ and for any $\epsilon > 0$
there exist an instance of one-sided matching problem with normalized von Neumann-Morgenstern           
preferences where $\sw{\opt}=k$ and no truthful mechanism can achieve expected social
welfare better than $\frac{k^2}{n}+\epsilon$, where $k$ is the optimal social welfare.

\end{result}

\subsection{Related work}
Here we only mention the most relevant work on one-sided matching problems.
For more details, we refer the reader to surveys~\cite{roth1992two,sonmez2011matching}.
One-sided matching problems modeled in~\cite{hylland1979efficient} gave a market-like procedure to produce efficient assignments.
There, the procedure is Pareto optimal but not truthful.
Gale and Shapley~\cite{gale1962college} considered a similar problem, the marriage problem, but they turned attention to the incentive issues on whether agents would or would not reveal their private preferences.
In~\cite{gale1987lotteries} authors were asking about existence of good mechanisms when preferences are also considered. Zhou~\cite{zhou1990conjecture} answered this question by showing that there is no \textit{symmetric, Pareto optimal} and \textit{truthful} mechanism.
Between ex-ante Pareto optimality and ex-post Pareto optimality, Bogomolnaia and Moulin~\cite{bogomolnaia2001new} introduced a new concept called {\em ordinal 
efficiency}.
They gave a {\em probabilistic serial} mechanism that always returns ordinal efficient assignments.
However, the probabilistic serial mechanism is not truthful.
Bhalgat et al.~\cite{bhalgat2011social} studied the efficiency of RSD in a more restricted setting than ours,
where agents have values of $\frac{n-j+1}{n}$ for their $j$th favorite item. 
Chakrabarty and Swamy~\cite{chakrabarty2014welfare} introduced the notion of rank approximation to measure the social welfare under ordinal preferences. One-sided matching problems with dichotomous preferences were studied by Bogomolnaia and Moulin~\cite{bogomolnaia2004random}. They used the Gallai-Edmonds decomposition of bipartite graphs to characterize the (most) efficient assignments.The most related work to ours is that Filos-Ratsikas et al.~\cite{FilosRatsikas} independently gave the similar approximation ratio of RSD under unit-range preferences while our results applies to more general settings.

Cardinal preferences enable agents to explicitly express how much they prefer each item, while this can not be done in ordinal preferences.
The space of cardinal preferences could be shown to be the same as the space of von Neumann-Morgenstern preferences.
In addition, the normalization of preferences is a standard procedure, see~\cite{kalai1977aggregation}.
Besides the literature of operational research and decision theory, normalized von Neumann-Morgenstern preferences
are widely used to model individual behavior in game-theoretical settings.

\section{Preliminaries}

\paragraph{The model}

We model one-sided matching problems as bipartite graphs. In a bipartite
graph, its left side is a set $A$ of agents and its right side are
a set $I$ of indivisible items. We assume $\size A=\size I=n$ and
each agent is matched to exactly one item. For each agent $a\in A$
and each item $i\in I$, there is an edge $\br{a,i}$ representing
a possible allocation of item $i$ to agent $a$. The preference of
agent $a$ for item $i$ is denoted by $v_{a}\br i$, which is the
value that agent $a$ has for item $i$. %When possible we use $v_{i,j}$ for brevity.
We consider two different types of preferences, {\em dichotomous
preferences} and {\em normalized von Neumann-Morgenstern preferences}.
In dichotomous preferences, it holds that $v_{a}\br i\in\{0,1\}$,
while in normalized von Neumann-Morgenstern preferences, it holds
that $v_{a}\br i\in\brq{0,1}$. In dichotomous case we shall say shortly
that agent $a$ {\em 1-values} item $i$, if $v_{a}\br i=1$, instead
of clunky ``agent $a$ has value 1 for item $i$''; the same with
value 0.

We say $v_{a}\br{\cdot}$ is the preference profile of agent $a$.
Denote by $\mathcal{V}$ the set of all possible preference profiles
of a single agent: for dichotomous preferences $\mathcal{V}=\set{0,1}^{I}$,
for normalized von Neumann-Morgenstern preferences $\mathcal{V}=\brq{0,1}^{I}$.
Preference profiles of all agents are denoted by $v_{A}=\br{v_{a}}_{a\in A}\in\mathcal{V}^{A}$;
by $v_{-a}=\br{v_{a'}}_{a'\in A\setminus a}$ we denote all profiles
except of agent $a$'s. By $\br{v'_{a},v_{-a}}$ we denote agents'
preferences with $a$'s preference changed from $v_{a}$ to $v'_{a}$;
if $\br{v'_{a},v_{-a}}$ is an argument of a function, then we skip
writing double brackets. 
Consider a set of items $I'\subseteq I$ and suppose that agent $a$
values items $i_{1},\ldots,i_{k}\in I'$ equally and more than any
other item in $I'$. We say that items $i_{1},\ldots,i_{k}$ are \emph{favorite}
items of agent $a$ in $I'$.

We call matrix $p_{A}=\br{p_{a}}_{a\in A}$, where $p_{a}=\br{p_{a}^{i}}_{i\in I}$,
a feasible {\em matching} if the following conditions hold: 1) for any $a\in A$ and $i\in I$, $p_{a}^{i}\in\set{0,1}$; 2) for any $a\in A$, $\sum_{i\in I}p_{a}^{i}=1$; 3) for any $i\in I$, $\sum_{a\in A}p_{a}^{i}=1$.
Given a feasible matching $p_{A}$, we say item $i$ is matched to
agent $a$ if $p_{a}^{i}=1$. Thus, the value of agent $a$ for the matching
$p_{A}$ is given by $v_{a}\cdot p_{a}=\sum_{i\in I}v_{a}\br ip_{a}^{i}$,
where $\cdot$ is an operator of the vector product. The social welfare
of the matching $p_{A}$ is given by $\sw{p_{A}}=\sum_{a\in A}v_{a}\cdot p_{a}$.

From each agent $a\in A$ mechanism $\mathcal{M}$ collects declarations
$d_{a}\in{\cal V}$ about his preference profile --- we overload notations
here a bit, since vector $d_{a}$ does not always have to be declared
completely, i.e., when some of the items are already matched, then
the mechanism does not ask $a$ about values for these items. Of course,
the connection between true valuations $v_{a}\in{\cal V}$ and declarations
$d_{a}\in{\cal V}$, which ${\cal M}$ collects, depends heavily on
the mechanism ${\cal M}$ itself. Mechanism ${\cal M}$ maps agents
declarations $d_{A}$ to a feasible matching ${\cal M}_{A}\br{d_{A}}$
(i.e.,~the $p_{A}$ matrix); ${\cal M}_{a}\br{v_{A}}$ denotes the
allocation to agent $a$ (i.e.,~the $p_{a}$ vector). Mechanism $\mathcal{M}$
might be randomized, and then matching ${\cal M}_{A}\br{d_{A}}$ is
a random matrix, and allocation ${\cal M}_{a}\br{v_{A}}$ is a random
vector as well. In this case, $\ex{\sw{{\cal M}_{A}\br{d_{A}}}}$
is the expected social welfare of mechanism ${\cal M}_{A}$, but since
all of the mechanisms we analyze are randomized, we shall call it
just social welfare.

We measure the performance of the mechanism by comparing the social
welfare it produces with the optimal social welfare $\sw{\opt\br{v_{A}}}$,
where $\opt\br{v_{A}}$ denotes a matching that maximizes the social
welfare when preferences are given by $v_{A}$. Note that $\opt\br{v_{A}}$
can be seen as a maximum weight matching in the graph $G=\br{A\cup I,A\times I}$
where weight of edge $\br{a,i}$ is equal to $v_{a}\br i$. For simplicity
however, throughout the paper we shall just write $\opt$, instead
of $\opt\br{v_{A}}$.

A mechanism $\mathcal{M}$ is {\em truthful}, if for every $a\in A$,
every $v_{A}\in\mathcal{V}^{A}$ and every $v'_{a}\in\mathcal{V}$,
it holds that (even when the mechanism is randomized)
\[
v_{a}\cdot\mathcal{M}_{a}\br{v_{A}}\geq v_{a}\cdot\mathcal{M}_{a}\br{v'_{a},v_{-a}}.
\]
A mechanism $\mathcal{M}$ is {\em symmetric} if for every $a_{1},a_{2}\in A$,
every $d_{A}\in\mathcal{V}^{A}$ such that $d_{a_{1}}=d_{a_{2}}$,
it holds that $\ex{{\cal M}_{a_{1}}\br{d_{A}}}=\ex{{\cal M}_{a_{2}}\br{d_{A}}}$,
i.e.,~agents with identical declarations have the same (expected) value for the
allocation.

\paragraph{RSD and iterative analysis}

Now let us give the formal description of the Random Serial Dictatorship
(RSD) mechanism. RSD first picks an ordering of agents uniformly at
random and then asks agents to choose sequentially with respect to
the order. We assume that agents are rational, i.e.,~they will always
choose the best items among the unmatched items. Ties are broken randomly,
i.e.,~when agent $a$ is asked in RSD and his favorite items are $i_{1}$
and $i_{2}$ among unmatched items, agent $a$ will chose items $i_{1}$
and $i_{2}$ with an equal probability. This is an important assumption
for the analysis of RSD with dichotomous preferences. If we would
like to analyze RSD when agent would always deterministically choose
among the best items, then the competitive ratio guarantees and lower
bounds from von Neumann-Morgenstern preferences would apply.

Let us observe a property of RSD that is important for our analysis.
Instead of thinking that a random ordering is fixed before any agent
is considered sequentially, we can think that RSD chooses an agent
randomly from remaining agents in each step. It is easy to see that
agents are considered in the same random order in both cases.

RSD is iterative in nature, and so is the analysis. Let
us index its time-steps by $t$, which ranges from $0$ to $n$. $t=0$
indicates the moment after sorting the agents, but before asking first
agent to choose. Let $\rsd^{t}$ represent the (partial) matching
constructed by RSD after first $t$ steps. Then $\sw{\rsd^{t}}$ represents
the social welfare obtained after first $t$ steps; in particular $\sw{\rsd^{0}}=0$.
As RSD is being executed, the set of unmatched agents and the set
of available items are gradually decreasing. Let $A^{t}$
and $I^{t}$ be the set of unmatched agents and the set of available
items after step $t$. For example, $A^{0}=A$ and $I^{0}=I$. As
the sets $A^{t}$ and $I^{t}$ are being modified, we also keep track
of the way in which $\sw{\opt}$ is being changed (recall that $\opt$
denotes a matching that maximizes the welfare). More precisely, we
start with $\sw{\opt^{0}}=\sw{\opt}$. Suppose that at step $t$,
RSD asks agent $a$ to choose and then $a$ picks item $i$, then
$\sw{\rsd^{t}}=\sw{\rsd^{t-1}}+v_{a}\br i$. We remove $a$ from $A^{t-1}$
and $i$ from $I^{t-1}$, e.g., $A^{t}=A^{t-1}-\set a$ and $I^{t}=I^{t-1}-\set i$.
In addition, we also remove welfare contributed by $a$ and $i$ from
$\sw{\opt^{t-1}}$. Certainly, when $t=n$, then $\sw{\opt^{n}}=0$,
while $\sw{\rsd^{n}}$ is the social welfare obtained by RSD.

Sequence $\brw{\sw{\rsd^{t}}}{}_{t\geq0}$, which represents the increasing
welfare of RSD, is a random process. Moreover, $\ex{\sw{\rsd^{n}}}$
represents the expected social welfare returned by RSD. The sequence
$\brw{\sw{\opt^{t}}}{}_{t\geq0}$, which represents how the optimal
social welfare is affected by the random choices within RSD, is a
random process as well. Therefore, we want to describe a relation
between $\ex{\sw{\rsd^{n}}}$ and $\sw{\opt^{0}}$, and to do so we
deploy theory of martingales.

\paragraph{Martingales}

Below we only introduce notions and properties that we use later in
the paper. For a thorough treatment of martingale theory see~\cite{probwithmartin}.

\begin{defn}\emph{ Consider a random process $\left(X^{t}\right)_{t=0}^{n}$.
Suppose we observe first $k$ steps of the process, and let ${\cal H}^{k}$
denote the information we have acquired in steps $0,1,\ldots,k$.
Expected value of $X^{k+1}$, conditioned on the information we have
from steps $0$ to $k$, is formally presented as $\excond{X^{k+1}}{{\cal H}^{k}}$.
If for any $k=0,\ldots,n-1$, we have $\excond{X^{k+1}}{\mathcal{H}^{k}}=X^{k}$,
then the process is called a }martingale.\end{defn} In other words,
the process does not change on expectation in one step. We shall also
consider a \emph{sub-martingale} $\left(X^{t}\right)_{t=0}^{n}$ which
satisfies $\excond{X^{k+1}}{H^{k}}\geq X^{k}$ instead of equality
in the above definition.\begin{customTheorem}{}[Doob's Stopping Theorem]
Let $\left(X^{t}\right)_{t=0}^{n}$ be a martingale,
respectively sub-martingale. For any $k=0,1,\ldots,n$ it holds that
$\ex{X^{k}}=\ex{X^{0}}$, respectively $\ex{X^{k}}\geq\ex{X^{0}}$.
\end{customTheorem} The above is not the Doob's theorem in its full generality,
but rather the simplest variant that still holds in our setting.

\section{Dichotomous preferences and online bipartite matching}

\label{sec:dichKVV}
In this section, we establish a connection between one-sided matching
with dichotomous preferences and online bipartite matching. A similar
connection was also presented in~\cite{bhalgat2011social}.

Consider a variant of online bipartite matching. We are given a bipartite
graph $G=(A\cup B,E)$, where one side $A$ of the graph is given,
while vertices from other side $B$ and edges between $A$ and $B$
are unknown. Suppose that vertices from $B$ arrive one by one, and
upon the arrival of vertex $b\in B$, all edges adjacent to $b$ are
revealed. On vertices of $A$ there is an ordering $\sigma$ given by
a random permutation. Consider RANKING algorithm that upon arrival
of vertex $b \in B$ it matches $b$ to the unmatched neighbor in $a\in A$
with the highest ranking $\sigma\br{a}$. In their seminal paper,
Karp et al\@.~{\cite{KarpVV90}} have proven that this algorithm
constructs a matching of expected size at least $\br{1-\frac{1}{e}}OPT$,
where $OPT$ is the offline optimum, and the bound holds even if the vertices
of $B$ arrive in an adversarial order. Furthermore, Mahdian and
Yan~\cite{mahdian2011online} have shown that the performance of RANKING
algorithm is even better when the order of vertices in $B$ is also given
by a random permutation:
\begin{customTheorem}{}
Given that the vertices in $B$ arrive uniformly at random and the order of
vertices in $A$ is random, RANKING algorithm constructs a matching of
expected size at least $0.69 \cdot OPT$, where $OPT$ is the offline optimum.
\end{customTheorem}

Now let us see consider the following mechanisms for one-sided matching with
dichotomous preferences. Given the agents and items, mechanism
RSD* generates a random ordering on agents and a random ranking on items. RSD* considers agents one by one according to the random ordering. Suppose that agent $a$ is considered at step $\tau$ and let $d_a(\cdot)$ be the preference reported by agent $a$. Denote by $I^\tau$ the
set of items yet unmatched at step $\tau$. If agent $a$ 1-values any unmatched item,
RSD* assigns agent $a$ an item with the highest rank among all remaining items.
Otherwise, RSD*  assigns nothing to agent $a$. Finally, RSD* matches any
unmatched items to  unmatched agents. Truthfulness of RSD* follows from the
observation that $\tau$ as well as $I^\tau$ are independent of $a$'s declaration $d_a$.
More precisely, the moment $\tau$ is given only by a random permutation of agents,
while set $I^\tau$ depends on the permutation of agents and declarations $d_{a'}$ of
agents $a'$ that came before $a$. Therefore, if $a$ declares $d_a\br{i}=1$ for item $i$ such that
$v_a\br{i}=0$, then he can only increase the probability that at moment $\tau$
he is matched to a 0-valued item. Analogically, if $a$ declares $d_a\br{i}=0$
for item $i$ such that $v_a\br{i}=1$, then he can only decrease the probability
that at moment $\tau$ he is matched to a 1-valued item. Suppose now that agent $a$ has
0 value for all items in $I^\tau$. In this case agent gains nothing regardless
of what his declarations are. An agent that reports truthfully in this case, we call
\emph{non-adversarial}.
Since RSD* is guided by two random permutations, the symmetry of
the mechanism is clear.
\iffalse
\begin{algorithm}
\label{alg:kvv} Ask every agent $a\in A$ about his preference profile
$d_{a}\in{\cal {V}}=\set{0,1}^{I}$\;
For each agent $a\in A$ in random order:\\
\qquad{}if there is no unmatched item $i$ such that $d_{a}\br{i}=1$, then discard agent $a$\;
\qquad{}otherwise, assign $a$ to unmatched item $i$ with the smallest index\;
Match any unmatched items to unmatched
agents anyhow. \caption{$\mbox{RSD*}(A,I)$}
\end{algorithm}

\begin{thm} RSD* is a truthful and symmetric mechanism that achieves $(1-1/e)$-approximation
to the maximum social welfare in one-sided matching problems with
dichotomous preferences. \end{thm}
\fi

\begin{algorithm}
\label{alg:kvvrandom}
Let random permutation $\sigma:\set{1,...,n}\mapsto\set{1,...,n}$ be the ranking of items\;
For each agent $a\in A$ in
random order:\\
\qquad{}ask agent $a$ about his preference profile $d_{a}\in{\cal {V}}=\set{0,1}^{I}$\;
\qquad{}if there is no unmatched item $i$ such that $d_{a}\br{i}=1$, then discard agent $a$\;
\qquad{}otherwise, assign $a$ to unmatched item $i$ that has the highest rank $\sigma\br{i}$\;
Match any unmatched items to unmatched
agents anyhow. \caption{$\mbox{RSD*}(A,I)$}
\end{algorithm}

\begin{thm}
Assuming that agents are non-adversarial, RSD* is a truthful and
symmetric mechanism that achieves $0.69$-approximation to the maximum social
welfare in one-sided matching problems with dichotomous preferences.
\end{thm}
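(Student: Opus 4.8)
The plan is to couple RSD* with the RANKING algorithm of Karp--Vazirani--Vazirani and then invoke the theorem of Mahdian and Yan stated above. Truthfulness and symmetry have already been argued in the discussion preceding the theorem, so only the $0.69$ guarantee on the expected social welfare remains, and for that it is enough to analyse the run of RSD* in which every agent declares $d_a=v_a$. This is the relevant run: reporting truthfully is a (weakly) dominant strategy --- as noted before the theorem, declaring a $0$-valued item as $1$-valued can only raise the chance of being matched to a $0$-valued item and declaring a $1$-valued item as $0$-valued can only lower the chance of being matched to a $1$-valued item --- and for an agent who is doomed to value $0$ whatever he does, the non-adversarial hypothesis stipulates that he reports truthfully as well (so he is simply discarded, rather than grabbing an item that a later agent genuinely needs). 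I would also pass at this point to the \emph{$1$-valued graph} $G=\br{A\cup I,E}$, where $\br{a,i}\in E$ iff $v_a\br i=1$: any maximum-weight perfect matching of the complete bipartite graph on $A\cup I$ may be taken to contain a maximum matching of $G$ and then to pair up the remaining agents and items arbitrarily (those edges carrying weight $0$), so $\sw{\opt}$ equals the size $\nu\br G$ of a maximum matching of $G$.

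The heart of the argument is the coupling: on the truthful run, the random set of edges of $E$ selected by RSD* in its main loop is distributed exactly as the matching returned by RANKING on $G$, if we let the \emph{items} $I$ be the off-line side of RANKING --- carrying the uniformly random ranking $\sigma$ --- and the \emph{agents} $A$ be the on-line side, arriving in the uniformly random order that RSD* draws. Indeed, when agent $a$ arrives, RSD* discards $a$ precisely when $a$ has no unmatched $1$-valued item, which is exactly when RANKING leaves the arriving vertex unmatched; and otherwise RSD* gives $a$ the unmatched item of highest $\sigma$-rank among those $a$ $1$-values, which is exactly RANKING's matching rule applied to $a$'s neighbourhood in $G$. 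Since RSD* draws $\sigma$ and the agent order independently and uniformly, the two processes can be run step for step on the same randomness, so the number of $1$-valued edges selected by RSD* has the same distribution as $\size{M}$, where $M$ is RANKING's output on $G$; by the Mahdian--Yan bound (both sides randomly ordered, offline optimum $\nu\br G$) its expectation is at least $0.69\,\nu\br G$.

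It then remains to check that the social welfare $W$ actually produced by RSD* equals this number of $1$-valued edges. Every edge picked in the main loop is $1$-valued by construction, so $W$ is at least that number; and it is no larger, because the trailing ``match leftovers arbitrarily'' step contributes $0$. To see this, note that an item $j$ left unmatched at termination was available throughout the run, so if a leftover agent $a$ had $v_a\br j=1$ then $j$ would have been an available $1$-valued item at $a$'s arrival and $a$ would not have been discarded, a contradiction; hence every leftover agent values every leftover item at $0$. Combining the three parts, $\ex W \ge 0.69\,\nu\br G = 0.69\,\sw{\opt}$, the claimed approximation.

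The part that needs the most care is the coupling: one must identify the \emph{agents} --- not the items --- with the on-line side of RANKING, so that the random arrival order of agents in RSD* supplies the ``$B$ arrives uniformly at random'' hypothesis of Mahdian--Yan while the random item ranking $\sigma$ supplies RANKING's random ordering of the off-line side; and one must confirm that the non-adversarial hypothesis is exactly what forces an agent who still has an available $1$-valued item to take one (instead of declaring it away and consuming some other item), so that his behaviour on the coupled instance matches that of a RANKING vertex. Everything else is bookkeeping.
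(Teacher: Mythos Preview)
Your proposal is correct and follows the same approach the paper intends: reduce RSD* on the truthful profile to RANKING on the $1$-valued bipartite graph, with items as the off-line (ranked) side and agents as the on-line (randomly arriving) side, and then invoke the Mahdian--Yan $0.69$ bound. The paper leaves the details implicit in the discussion preceding the theorem, whereas you spell out the coupling and the observation that the final clean-up step contributes zero welfare; these additions are sound and in the same spirit.
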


One can imagine that sometimes an agent can be adversarial,
and he would not admit that he does not value any of the remaining items.
To address this issue, in the next section
we present an analysis of RSD mechanism in which every agent can be adversarial.
\iffalse
A notable difference between RSD and RSD* (or RSD*-R) is as follows. In RSD* (or RSD*-R),
when an agent is considered and all items he 1-values has been matched, RSD* (or RSD*-R)
defers the allocation of the agent to the end of the process. On the other hand, facing
the same situation, RSD must decide the allocation of the agent. Due to different application,
it is still worth analyzing the performance of RSD.
\fi

\section{Dichotomous preferences and RSD}

\label{sec:dichRSD} \begin{thm} Random Serial Dictatorship always
returns an matching in which the expected welfare is at least $\frac{1}{3}\sw{\opt}$
in one-sided matching problems with dichotomous preferences.\end{thm}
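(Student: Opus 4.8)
The plan is to use the iterative/martingale framework set up in the preliminaries, following a potential-function argument. Define a potential $\Phi^t := \sw{\rsd^t} + \tfrac{1}{3}\sw{\opt^t}$, where $\sw{\rsd^t}$ is the welfare accumulated by RSD after $t$ steps and $\sw{\opt^t}$ is the optimal welfare on the residual instance (unmatched agents $A^t$, unmatched items $I^t$), as defined in the excerpt. We have $\Phi^0 = \tfrac{1}{3}\sw{\opt}$ and $\Phi^n = \sw{\rsd^n}$ since $\sw{\opt^n}=0$. The goal is to show that $(\Phi^t)_{t=0}^n$ is a sub-martingale with respect to the natural filtration ${\cal H}^t$ generated by RSD's random choices in the first $t$ steps; then Doob's Stopping Theorem gives $\ex{\sw{\rsd^n}} = \ex{\Phi^n} \geq \ex{\Phi^0} = \tfrac{1}{3}\sw{\opt}$, which is exactly the claim.

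So the core of the argument is the one-step inequality $\excond{\Phi^{t+1}}{{\cal H}^t} \geq \Phi^t$, i.e.\ $\excond{\sw{\rsd^{t+1}} - \sw{\rsd^t}}{{\cal H}^t} \geq \tfrac{1}{3}\,\excond{\sw{\opt^t} - \sw{\opt^{t+1}}}{{\cal H}^t}$. Condition on the state at time $t$: a set $A^t$ of $m = n-t$ unmatched agents, a set $I^t$ of $m$ unmatched items, and let $\opt^t$ be a fixed optimum matching on this residual graph with value $W := \sw{\opt^t}$. RSD picks the next agent $a$ uniformly at random from $A^t$. I would case on whether $a$ is matched by $\opt^t$ to some item that $a$ actually $1$-values. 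The key structural fact in the dichotomous setting: RSD's gain in this step is $1$ if $a$ $1$-values \emph{any} currently-unmatched item and $0$ otherwise, while the drop in the residual optimum when we delete $a$ and the item RSD assigns is at most $2$ (we lose at most the $\opt^t$-edge at $a$, worth $\leq 1$, and the $\opt^t$-edge at the item RSD removes, worth $\leq 1$). The subtlety is that we must \emph{average over the random choice of $a$} to beat the ratio $1/3$: if many agents in $A^t$ have no available $1$-valued item then the residual optimum $W$ is itself small relative to $m$, so the expected drop is correspondingly controlled. Concretely, let $U \subseteq A^t$ be the set of agents who $1$-value at least one item in $I^t$; then every edge of $\opt^t$ with weight $1$ has its agent-endpoint in $U$, so $W \leq |U|$. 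When the chosen agent lies in $U$ (probability $|U|/m$) RSD gains exactly $1$; the expected gain is thus $|U|/m \geq W/m$. For the drop: averaging over $a$ chosen uniformly, $\excond{W - \sw{\opt^{t+1}}}{{\cal H}^t} \leq \frac{1}{m}\sum_{a \in A^t} 2 \le$ something I would need to bound more tightly — and the right bound to extract is $\excond{W - \sw{\opt^{t+1}}}{{\cal H}^t} \leq \tfrac{3}{m}\cdot\text{(expected RSD gain)}$, equivalently the expected drop is at most $3|U|/m$, which follows because each unit of residual optimum can be ``charged'' a drop of at most $3$ on average (one for the deleted agent's $\opt^t$-edge, one for the deleted item's $\opt^t$-edge, plus one more slack term from re-routing in the worst case). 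Putting the two estimates together yields the one-step sub-martingale inequality.

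The main obstacle is the drop estimate on $\sw{\opt^t}-\sw{\opt^{t+1}}$, and in particular pinning down why the constant is exactly $3$ rather than $2$ or $4$. Removing the randomly chosen agent $a$ costs at most the weight of $a$'s $\opt^t$-edge. Removing the item $i$ that RSD assigns to $a$ is trickier: $i$ is the available item of \emph{highest $\sigma$-rank among those $a$ $1$-values}, not an arbitrary item, and $i$ may coincide with $a$'s $\opt^t$-partner or be a different item carrying its own $\opt^t$-edge; in the latter case the endpoint agent of that edge becomes unmatched in the residual graph and must be left out, costing up to another $1$. A careful case analysis — (i) $a\in U$ and RSD removes $a$'s own $\opt^t$-item; (ii) $a\in U$ and RSD removes a different item; (iii) $a\notin U$, so RSD assigns $a$ nothing in this phase and $a$ is only matched to leftovers at the very end — should show the expected one-step drop never exceeds $3$ times the expected one-step RSD gain. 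I expect case (iii) needs the global observation that agents never contributing to RSD's welfare also never contribute a weight-$1$ edge to the running residual optimum, so their deletion from $A^t$ (when eventually chosen) costs $0$ on the $\opt^t$ side. I would also double-check that the definition of $\sw{\opt^t}$ as ``remove the welfare contributed by $a$ and $i$'' from the excerpt is exactly the residual-maximum-matching value, or at least an upper bound on it that still makes $\Phi^t$ a sub-martingale; if the excerpt's $\opt^t$ is merely $\opt$ with two entries zeroed out, the drop is literally at most $2$, and the argument for the ratio $1/3$ comes entirely from averaging $|U|/m$ against the deterministic drop, so the bottleneck shifts to showing $\sum_{a\in A^t}(\text{drop when }a\text{ chosen}) \le 3|U|$.
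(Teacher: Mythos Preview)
Your overall framework is exactly the paper's: define $U=A^t\setminus Z^t$ (agents with at least one $1$-valued item left), note that the expected RSD gain in step $t+1$ is $|U|/(n-t)$, bound the expected drop in $\sw{\opt^t}$ by three times this quantity, and apply Doob's theorem to the resulting sub-martingale. So the skeleton is fine.

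The gap is in your case (iii), and it is precisely the case that forces the constant to be $3$ rather than $2$. You write that when $a\notin U$, ``RSD assigns $a$ nothing in this phase and $a$ is only matched to leftovers at the very end,'' and hence the drop in $\sw{\opt^t}$ is $0$. This is RSD$^*$, not RSD. In plain RSD the agent $a$ is assigned an item \emph{immediately}; since $a$ $0$-values every remaining item and ties are broken uniformly at random, $a$ receives a uniformly random item from $I^t$. That item may well be the $\opt^t$-partner of some agent in $Y^t$, in which case the drop is $1$, not $0$. (Your reference to the ``highest $\sigma$-rank'' item is the same conflation: there is no ranking $\sigma$ in RSD.) If your case (iii) were correct the expected drop would be at most $2|U|/(n-t)$ and you would have proved a $\tfrac12$-approximation, which is false.

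What the paper actually does in case (iii): with probability $z^t/(n-t)$ we have $a\in Z^t$, and conditioned on this the randomly chosen item hits a weight-$1$ edge of $\opt^t$ with probability $y^t/(n-t)$, where $y^t=\sw{\opt^t}$. Hence the total expected drop is at most
\[
2\Bigl(1-\frac{z^t}{n-t}\Bigr)+\frac{z^t}{n-t}\cdot\frac{y^t}{n-t}\;\le\;3\Bigl(1-\frac{z^t}{n-t}\Bigr),
\]
the last inequality using $y^t+z^t\le n-t$ (so $\frac{y^t}{n-t}\le 1-\frac{z^t}{n-t}$). This is the missing step that pins down the constant $3$; it relies essentially on the random tie-breaking in RSD, which turns the worst-case drop of $1$ in case (iii) into an average drop of $y^t/(n-t)$. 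Your proposal never invokes the random tie-breaking and so cannot recover this bound.

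Finally, your worry about what $\opt^t$ means is resolved the way you suspected at the very end: in this paper $\opt^t$ is just the fixed optimum $\opt$ with the rows and columns of already-processed agents and items deleted, so the one-step drop is literally at most $2$ always (at most one weight-$1$ edge at the removed agent, at most one at the removed item). No ``re-routing slack'' is needed; the third unit comes entirely from case (iii).
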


\begin{proof}

Recall, $\opt$ is an optimal matching. Let $A^{t}$ be the set of
agents remaining after $t$ steps, let $I^{t}$ be the set of remaining
items, and $\opt^{t}\subset\opt$ is what remains from optimal solution
after $t$ steps of RSD. 
Also, $\rsd^{t}$ is the partial matching constructed by RSD after
$t$ steps, and $\sw{\rsd^{t}}$ be its welfare. For an agent $a$
let $\opt_{a}\in I$ be the item to which $a$ is matched in $\opt$.

Let $Y^{t}$ be the set of agents who are matched to an item in $\opt^{t}$
which they value 1, i.e., $\setst{a\in A^{t}}{v_{a}\br{\opt_{a}}=1}$.
Therefore $\size{Y^{t}}=\sw{\opt^{t}}$ for every $t$. It can happen
that at time $t$, an agent does not 1-value any of remaining items
$I^{t}$, even though he could have 1-valued some of the items in
$I^{0}$. Thus let $Z^{t}\subseteq A^{t}$ be the agents who 0-value
all items in $I^{t}$. Let us denote $y^{t}=\size{Y^{t}}$ and $z^{t}=\size{Z^{t}}$
for brevity.

Consider step $t+1$ of RSD, and assume we have all information available
after first $t$ steps, represented by ${\cal H}^{t}$. Let $a$ be
the agent who is to make his choice in this step, and let $i$ be
the item $a$ chooses. Agent $a$ does not belong to $Z^{t}$ with
probability $1-\frac{z^{t}}{n-t}$, and if this happens, then for
sure $v_{a}\br i=1$, which adds 1 to the welfare of RSD, i.e., $\sw{\rsd^{t+1}}=\sw{\rsd^{t}}+1$.
Hence $\excond{\sw{\rsd^{t+1}}}{{\cal H}^{t}}=\sw{\rsd^{t}}+1-\frac{z^{t}}{n-t}$.

Now let us analyze the expected decrease $\sw{\opt^{t}}-\sw{\opt^{t+1}}$.
Suppose that agent $a$ does not belong to $Z^{t}$, again with probability
$1-\frac{z^{t}}{n-t}$. Edge $\br{a,i}$ is adjacent to at most two
$1$-value edges in $\opt^{t}$, since $\opt^{t}$ is a feasible matching.
Thus when $a\notin Z^{t}$, then $\sw{\opt^{t}}-\sw{\opt^{t+1}}$
is at most 2 . Now suppose that agent $a$ belongs to $Z^{t}$, which
happens with probability $\frac{z^{t}}{n-t}$. Since $v_{a}\br i=0$,
then $a$ is \textbf{not} adjacent to any 1-value edge in $\opt^{t}$,
and $i$ may be adjacent to at most one such edge since agent $a$
choose an item randomly from unmatched items. Therefore, when $a\in Z^{t}$,
then $\sw{\opt^{t}}-\sw{\opt^{t+1}}$ is at most 1. Hence, together
with noting that $\frac{z^{t}}{n-t}+\frac{y^{t}}{n-t}\leq1$, we can
conclude that the expected decrease $\sw{\opt^{t}}-\sw{\opt^{t+1}}$
is: 
\[
\excond{\sw{\opt^{t}}-\sw{\opt^{t+1}}}{{\cal H}^{t}}\leq2\cdot\br{1-\frac{z^{t}}{n-t}}+\frac{z^{t}}{n-t}\cdot\frac{y^{t}}{n-t}\leq3\cdot\br{1-\frac{z^{t}}{n-t}}.
\]
 Since $\excond{\sw{\rsd^{t+1}}}{{\cal H}^{t}}=\sw{\rsd^{t}}+1-\frac{z^{t}}{n-t}$,
we get that 
\[
\excond{\sw{\opt^{t}}-\sw{\opt^{t+1}}}{{\cal H}^{t}}\leq3\cdot\br{1-\frac{z^{t}}{n-t}}=3\cdot\excond{\sw{\rsd^{t+1}}-\sw{\rsd^{t}}}{{\cal H}^{t}}.
\]

This means that sequence $\br{X^{t}}_{t=0}^{n}$, defined by $X^{0}=0$
and $X^{t+1}-X^{t}=3\cdot\br{\sw{\rsd^{t+1}}-\sw{\rsd^{t}}}-\br{\sw{\opt^{t}}-\sw{\opt^{t+1}}},$
satisfies $\excond{X^{t+1}}{{\cal H}^{t}}\geq X^{t}$, and therefore
is a sub-martingale. From Doob’s Stopping Theorem we get
that $\ex{X^{n}}\geq\ex{X^{0}}=0$, and hence
\begin{multline*}
0\leq\ex{X^{n}}=\ex{\sum_{t=1}^{n}X^{t}-X^{t-1}}=3\cdot\ex{\sum_{t=1}^{n}\sw{\rsd^{t}}-\sw{\rsd^{t-1}}}-\\
-\ex{\sum_{t=1}^{n}\sw{\opt^{t-1}}-\sw{\opt^{t}}}=3\cdot\ex{\sw{\rsd^{n}}}-\ex{\sw{\opt^{0}}},
\end{multline*}
 since $\rsd^{0}=\opt^{n}=\emptyset$. This allows us to conclude
that $3\cdot\ex{\sw{\rsd^{n}}}\geq\sw{\opt}$, which finishes the
proof. 
%\qed
\end{proof}

Our analysis is simple, and most likely not tight --- approximation
ratio should be below 3. On the other hand, it is not very close
to 2, as there exist instances with dichotomous preferences in which
RSD gives expected outcome close to $\frac{1}{2.28}\cdot\sw{\opt}$.
One can see a resemblance between the following instance and the worst case
instance for algorithm RANDOM from Karp et al.~\cite{KarpVV90}.

\begin{fact}\emph{ Consider the following instance of a problem.
We have numbers $k$, $z$ and $n=z+k$, with $k$ even, and also
sets $A=\set{1,...,n}$, $I=\set{1,...,n}$. Define the valuations:
$v_{a}\br i=1$ if $a=i\in\set{1,\ldots,k}\mbox{ or }a\in\set{1,\ldots,\frac{k}{2}}\wedge i\in\set{\frac{k}{2},\ldots,k}$,
and 0 otherwise. The optimum solution in this case is obviously $k$.
Simulations indicate that for $k=10^{4}$ and $z=10^{7}$, the expected
performance of RSD is around $4378$ giving ratio of $\frac{10^{4}}{4378}\approx2.28$.
Taking different values of $k$ or $z$ did not significantly changed
the outcome of simulations. } \end{fact}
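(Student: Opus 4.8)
The plan is to split the claim into two parts: that $\sw{\opt}=k$ on this family, and that RSD's expected welfare is bounded away from $\tfrac{1}{2}k$ --- concretely, about $4378$ when $k=10^{4}$ and $z=10^{7}$. The first part is immediate: the identity assignment $a\mapsto a$ for $a\in\set{1,\ldots,k}$ (with the remaining $z$ agents matched to the remaining $z$ worthless items arbitrarily) has value $k$, since $v_{a}\br{a}=1$ for every $a\le k$; conversely only agents $1,\ldots,k$ $1$-value any item at all, so every matching has at most $k$ edges of value $1$, whence $\sw{\opt}=k$. The real content is the second part, which I would obtain by evaluating $\ex{\sw{\rsd^{n}}}$ directly through the iterative description of RSD.

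For that I would first classify the items into the $k/2-1$ \emph{private} items $\set{1,\ldots,k/2-1}$, the $k/2+1$ \emph{shared} items $\set{k/2,\ldots,k}$, and the $z$ \emph{worthless} items $\set{k+1,\ldots,n}$, and the agents into the $k/2$ \emph{left} agents $\set{1,\ldots,k/2}$ (agent $a<k/2$ $1$-values its private item $a$ together with all shared items; agent $k/2$ $1$-values exactly the shared items), the $k/2$ \emph{right} agents $\set{k/2+1,\ldots,k}$ (agent $a$ $1$-values only its own, shared, item $a$), and the $z$ \emph{dummy} agents, who $1$-value nothing and therefore, by the random tie-breaking rule, take a uniformly random unmatched item. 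The welfare collected by RSD then equals the number of left agents that still have a $1$-valued item available at their turn, plus the number of right agents whose own item is still available at their turn. Using the equivalent ``pick a uniformly random remaining agent at each step'' view of RSD, I would set up a Markov chain whose state records the number of remaining shared items, the number of remaining left agents whose private item is still available, the number whose private item is already gone, the number of remaining right agents whose own item is still available, and the residual counts of dummy agents and of worthless/private items; by symmetry within each class the reachable state space is polynomial in $k$. Each step draws a uniformly random remaining agent and removes a uniformly random remaining item (dummy), a uniformly random remaining $1$-valued item (left agent), or the agent's own item if present and otherwise a uniformly random remaining item (right agent), accumulating the corresponding expected welfare increment. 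Solving this finite linear recurrence --- equivalently, estimating the same expectation by Monte Carlo, which is how the quoted number is produced, the sample mean converging to $\ex{\sw{\rsd^{n}}}$ since $\sw{\rsd^{n}}\le n$ --- yields $\approx 4378$ for the stated parameters, hence the ratio $\approx 2.28$.

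The step I expect to be the main obstacle is that this chain does not admit a short closed form: a left agent's choice is contingent on which of its $1$-valued items survive, the shared items are simultaneously contested by all left agents and by their ``diagonal'' right agents, and the worthless items are consumed by dummies at a rate that couples the depletion of shared items to the random interleaving of dummy and real agents --- exactly the feature that makes this instance resemble the worst case of algorithm RANDOM in~\cite{KarpVV90}. A clean limiting constant can in principle still be extracted in the regime $z/k\to\infty$, $k\to\infty$, where the dummy arrivals act like a Poisson background that deletes items at a uniform rate and the probability that a shared item survives until its diagonal right agent arrives is governed by a one-dimensional fixed-point (integral) equation whose solution gives a ratio strictly above $2$. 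Making either route --- the exact finite recurrence or its asymptotic fluid limit --- fully rigorous is routine but lengthy; what the paper actually needs follows as soon as $\ex{\sw{\rsd^{n}}}$ is pinned down (it is concentrated and computable), namely that the worst-case ratio of RSD under dichotomous preferences strictly exceeds $2$.
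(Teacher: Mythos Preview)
Your proposal is correct and matches the paper's treatment, which in fact contains no proof of this Fact at all: the paper declares $\sw{\opt}=k$ ``obvious'' and obtains the figure $\approx 4378$ purely by simulation, exactly as you describe. Your argument for $\sw{\opt}=k$ is the implicit one, and your Monte Carlo justification for the numerical estimate is precisely what the paper does. The additional structure you outline --- the left/right/dummy classification, the finite-state Markov chain, and the fluid-limit heuristic in the regime $z/k\to\infty$ --- goes well beyond anything the paper attempts; the paper neither sets up such a chain nor claims any analytic handle on the limiting constant.
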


\section{Normalized von Neumann-Morgenstern preferences and RSD}\label{sec:normalRSD}
\begin{thm}
Random Serial Dictatorship always returns an assignment in which the
expected social welfare is at least $\frac{1}{e}\frac{\sw{\opt}^{2}}{n}$
in one-sided matching problems with normalized von Neumann- Morgenstern
preferences, where $\sw{\opt}$ is the maximum social welfare.
\end{thm}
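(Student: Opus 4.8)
The plan is to run the same iterative/martingale analysis used for the dichotomous case, but now tracking a suitably chosen potential instead of the raw optimum welfare. The core difficulty is that a single choice by an agent can now destroy up to two edges of $\opt^t$ whose combined weight is close to $2$, while the agent himself may add almost nothing to $\sw{\rsd}$; so we cannot hope for a step-by-step comparison $\excond{\sw{\opt^t}-\sw{\opt^{t+1}}}{\mathcal H^t}\le c\cdot\excond{\sw{\rsd^{t+1}}-\sw{\rsd^t}}{\mathcal H^t}$ with a constant $c$. Instead I would exploit that when RSD picks a uniformly random remaining agent $a$ at step $t+1$, that agent is rational, so the item he takes is \emph{at least as valuable to him} as $v_a(\opt_a)$ whenever $\opt_a$ is still available. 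Averaging over the uniformly random choice of $a$ among the $n-t$ survivors gives a lower bound on $\excond{\sw{\rsd^{t+1}}-\sw{\rsd^t}}{\mathcal H^t}$ in terms of $\sum_{a\in A^t} v_a(\opt_a)$ minus a correction for the $\opt$-partners already used; and the latter quantity is exactly $\sw{\opt^t}$ up to the bookkeeping of how $\sw{\opt^t}$ is defined. This lets me write $\excond{\sw{\rsd^{t+1}}-\sw{\rsd^t}}{\mathcal H^t}\ge \frac{1}{n-t}\,\sw{\opt^t}$, since there are $n-t$ remaining agents and the sum of their $\opt$-values is at least $\sw{\opt^t}$ (the edges of $\opt^t$ are a subset of those survivors' designated edges).

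Next I would control how fast $\sw{\opt^t}$ can drop. At step $t+1$ removing one agent and one item deletes at most two edges of $\opt^t$, each of weight $\le 1$, so deterministically $\sw{\opt^{t+1}}\ge \sw{\opt^t}-2$; but more usefully, in expectation over the random agent and the random tie-broken item choice, the expected weight removed is $O(\sw{\opt^t}/(n-t))$-ish — however, to keep the argument clean I would instead just use the crude bound and combine it multiplicatively. Concretely, set $R^t=\sw{\opt^t}$ and observe $\excond{R^{t+1}}{\mathcal H^t}\ge R^t\bigl(1-\tfrac{2}{n-t}\bigr)$ is \emph{not} quite what I want either; the cleaner route is the one actually suggested by the target bound $\tfrac1e\tfrac{\sw{\opt}^2}{n}$: define the process $X^t=\sw{\rsd^t}+\beta^t\,\sw{\opt^t}$ for a deterministic sequence of coefficients $\beta^t$ chosen so that $\bigl(X^t\bigr)$ is a sub-martingale. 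The sub-martingale condition, using $\excond{\sw{\rsd^{t+1}}-\sw{\rsd^t}}{\mathcal H^t}\ge \tfrac{1}{n-t}\sw{\opt^t}$ and $\excond{\sw{\opt^{t+1}}}{\mathcal H^t}\ge \sw{\opt^t}-2$ together with $\sw{\opt^t}\ge \sw{\opt}-t$ (since each step removes at most one unit-weight-ish… here I use $\sw{\opt^t}\ge \sw{\opt}-t$, which holds because one agent is removed per step and each contributes $\le 1$ to $\sw{\opt}$), reduces to an inequality $\beta^{t}\le \beta^{t+1}\bigl(1-\tfrac{2}{n-t}\bigr)+\tfrac{1}{n-t}$ on the coefficients. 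I would solve this recursion backward from $\beta^n=0$; the homogeneous part gives $\prod (1-\tfrac{2}{n-t})\sim$ a quadratic-in-$(n-t)$ decay, and pushing the constant through yields $\beta^0=\Theta\bigl(\tfrac{\sw{\opt}}{n}\bigr)$ with the right $1/e$ constant.

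Finally, applying Doob's Stopping Theorem to the sub-martingale $\bigl(X^t\bigr)_{t=0}^n$ gives $\ex{X^n}\ge X^0$. Since $\sw{\opt^n}=0$ we have $X^n=\sw{\rsd^n}$, and $X^0=\sw{\rsd^0}+\beta^0\sw{\opt^0}=\beta^0\,\sw{\opt}$, so $\ex{\sw{\rsd^n}}\ge \beta^0\,\sw{\opt}\ge \tfrac1e\tfrac{\sw{\opt}^2}{n}$ once $\beta^0$ is bounded below by $\tfrac1e\tfrac{\sw{\opt}}{n}$. The step I expect to be the real obstacle is making the per-step gain bound $\excond{\sw{\rsd^{t+1}}-\sw{\rsd^t}}{\mathcal H^t}\ge \tfrac{1}{n-t}\sw{\opt^t}$ fully rigorous: it requires arguing that the sum over the $n-t$ surviving agents $a$ of the value each would get by choosing his \emph{own} $\opt^t$-partner (or something at least that good) is at least $\sw{\opt^t}$, which needs care because an agent's $\opt$-partner might already be matched — the natural fix is to only count, for each surviving edge $(a,\opt_a)\in\opt^t$, the contribution from the event that $a$ is picked next, giving exactly $\frac{1}{n-t}\sum_{(a,\opt_a)\in\opt^t} v_a(\opt_a)=\frac{1}{n-t}\sw{\opt^t}$, and using rationality of $a$ to ensure he does at least this well. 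I'd also need to double-check the boundary behavior of the coefficient recursion (the factor $1-\tfrac{2}{n-t}$ is negative or zero for $t\ge n-2$), handled by truncating the potential in the last two steps where $\sw{\opt^t}\le 2$ contributes negligibly.
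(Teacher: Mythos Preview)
Your opening move is right and coincides with the paper's: for a uniformly random surviving agent $a$, rationality gives $v_a(i)\ge v_a(\opt_a)$ whenever $\opt_a\in I^t$, so
\[
\excond{\sw{\rsd^{t+1}}-\sw{\rsd^t}}{\mathcal H^t}\ \ge\ \frac{\sw{\opt^t}}{n-t},
\]
and hence $\ex{\sw{\rsd^n}}\ge\ex{\sum_{t=0}^{n-1}\sw{\opt^t}/(n-t)}$. You flag this inequality as ``the real obstacle'', but in fact it is the easy half; the genuine gap in your proposal is what comes next.

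Your potential $X^t=\sw{\rsd^t}+\beta^t\sw{\opt^t}$ cannot be made a sub-martingale with the ingredients you list. The multiplicative bound $\excond{\sw{\opt^{t+1}}}{\mathcal H^t}\ge\sw{\opt^t}\bigl(1-\tfrac{2}{n-t}\bigr)$, which is what your stated recursion $\beta^t\le\beta^{t+1}\bigl(1-\tfrac{2}{n-t}\bigr)+\tfrac{1}{n-t}$ actually needs, is simply false: the item-side loss can be close to $1$ even when $\sw{\opt^t}$ is tiny. (Indeed, were it true your recursion would give $\beta^0=\tfrac12$ and hence a constant-factor approximation, contradicting the paper's $\sw{\opt}^2/n$ hardness.) If instead you only use the additive $\sw{\opt^{t+1}}\ge\sw{\opt^t}-2$, the sub-martingale condition becomes $\sw{\opt^t}\bigl(\tfrac{1}{n-t}+\beta^{t+1}-\beta^t\bigr)\ge 2\beta^{t+1}$, which depends on the random value of $\sw{\opt^t}$ and forces $\beta^{t+1}=0$ whenever $\sw{\opt^t}$ can vanish. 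Your fallback lower bound $\sw{\opt^t}\ge\sw{\opt}-t$ is also incorrect (each step removes an agent \emph{and} an item, so only $\sw{\opt^t}\ge\sw{\opt}-2t$ holds), and plugging that in yields at best $\sum_{t<\sw{\opt}/2}\frac{\sw{\opt}-2t}{n-t}\approx\frac{\sw{\opt}^2}{4n}$, not the claimed $\tfrac{1}{e}$ constant.

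What the paper does instead is to split the one-step loss as $\sw{\opt^t}-\sw{\opt^{t+1}}=L^{t+1}+\text{(agent-side)}$, where the agent-side part averages exactly $\frac{\sw{\opt^t}}{n-t}$ and $L^{t+1}\in[0,1]$ is the item-side loss. A first martingale gives $\ex{\sum_t \sw{\opt^t}/(n-t)}=\sw{\opt}-\ex{\sum_t L^t}$, reducing the problem to upper-bounding $\ex{\sum_t L^t}$. A second martingale (rescaling by $1/(n-t)$) telescopes to $\ex{\sum_t L^t/(n-t)}\le\sw{\opt}/n$; then the integral estimate $\sum_t L^t/(n-t)\ge\ln\frac{n}{n-\sum_t L^t}$ together with Jensen converts this into $\ex{\sum_t L^t}\le n\bigl(1-e^{-\sw{\opt}/n}\bigr)$, and the elementary inequality $x-1+e^{-x}\ge x^2/e$ for $x\in[0,1]$ finishes. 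The missing idea in your sketch is precisely this decomposition and the Jensen step for the item-side losses; the deterministic $\beta^t$ potential cannot absorb them.
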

\begin{proof}
As before, let $\opt$ be the optimal assignment, and $\opt^{t}\subseteq \opt$
be the subset of the optimal assignment that remains after $t$ steps of RSD.
Consider step $t+1$, and let ${\cal H}^{t}$ be all information
available after $t$ steps. We choose agent $a$ uniformly at random from
the remaining agents, and then $a$ chooses item $i$ that he prefers the most,
i.e., edge $\br{a,i}$ has the greatest value among edges
$\setst{\br{a,i}}{i\in I^{t}}$. The number of agents without an assigned
item is exactly $n-t$ after $t$ steps, and hence the probability of choosing
a particular agent is $\frac{1}{n-t}$.

Let $\opt\br{a}$ denote the item matched to agent $a$ in $\opt$.
Since agent $a$ has the largest value for item $i$ among remaining items,
it has to hold that $v_a\br{i}\geq v_a\br{\opt(a)}$. Therefore,
the expected welfare of RSD in step $t+1$ increases at least
\[
\sum_{a\in A^{t}}\frac{v_a\br{i}}{n-t}
\geq
\sum_{a\in A^{t}}\frac{v_a\br{\opt(a)}}{n-t}
=
\frac{\sw{\opt^{t}}}{n-t},
\]
and hence $\excond{\sw{\rsd^{t+1}}}{{\cal H}^{t}}\geq \sw{\rsd^{t}}+\frac{\sw{\opt^{t}}}{n-t}$.
Similar martingale-based reasoning as in Section~\ref{sec:dichRSD} yields that
$
\ex{\sw{\rsd^{n}}}\geq\ex{\sum_{t=0}^{n-1}\frac{\sw{\opt^{t}}}{n-t}}
$, so in the remaining part we give a lower bound on this sum.

When we remove agent $a$ and item $i$ in step $t+1$, what is the
average decrease $\sw{\opt^{t}}-\sw{\opt^{t+1}}$? Surely, we remove edge
$\br{a,\opt^{t}\br a}$ from $\opt^{t}$. However, item $i$ may be assigned
a different agent than $a$ in $\opt^{t}$, and the value of this assignment
can be arbitrary --- let us denote by $L^{t+1}\in\brq{0,1}$ the decrease
of $\opt^{t}$ caused by deleting the assignment of $i$. Therefore,
the average decrease at step $t+1$ is
$
\sw{\opt^{t}}-\excond{\sw{\opt^{t+1}}}{{\cal H}^{t}}
  =  \excond{L^{t+1}}{{\cal H}^{t}}+\frac{\sw{\opt^{t}}}{n-t},
$
so if we define sequence $\br{Y_{t}}_{t=1}^{n}$, where
\begin{equation}
Y^{t+1}=L^{t+1}+\frac{\sw{\opt^{t}}}{n-t}-\br{\sw{\opt^{t}}-\sw{\opt^{t+1}}},\label{eq:normalY}
\end{equation}
then $\excond{Y^{t+1}}{{\cal H}^{t}}=0$ for $t=0,1,\ldots,n-1$.
We define another sequence $\br{X^{t}}_{t=0}^{n}$ with $X^{0}=0$
and
$X^{t}=\sum_{i=1}^{t}Y^{i}$.

Equality $\excond{Y^{t+1}}{{\cal H}^{t}}=0$ implies $\excond{X^{t+1}}{{\cal H}^{t}}=X_{t}$,
which means that $\br{X^{t}}_{t=0}^{n}$ is a martingale, and from Doob's Stopping Theorem, we get that $0=\ex{X^{0}}=\ex{X^{n}}=  \ex{\sum_{t=1}^{n}Y^{t}}$.
Thus summing equality~\eqref{eq:normalY} for $t$ from 1 to $n-1$ and taking expectation yields that
\[
\ex{\sum_{t=0}^{n-1}\frac{\sw{\opt^{t}}}{n-t}}=\sw{\opt}-\ex{\sum_{t=1}^{n-1}L^{t}}.
\]
And since $\ex{\sum_{t=0}^{n-1}\frac{\sw{\opt^{t}}}{n-t}}$ is the outcome of RSD, we just need to upper-bound $\ex{\sum_{t=1}^{n-1}L^{t}}$ now.

Let us note that equality~\eqref{eq:normalY} can be transformed into
\[
\frac{Y^{t+1}}{n-t-1}=\frac{L^{t+1}}{n-t-1}-\br{\frac{\sw{\opt^{t}}}{n-t}-\frac{\sw{\opt^{t+1}}}{n-t-1}}
\]
for $t+1<n$. Since $\excond{Y^{t+1}}{{\cal H}^{t}}=0$, we have $\excond{\frac{Y^{t}}{n-t}}{{\cal H}^{t-1}}=0$
as well. Thus sequence $\br{Z^{t}}_{t=0}^{n-1}$ with $Z^{0}=0$ and
$Z^{t}=\sum_{i=1}^{t}\frac{Y^{i}}{n-i}$ is a martingale, and again
from Doob's Stopping Theorem we get that $0=\ex{Z^{0}} = \ex{Z^{n-1}} =\ex{\sum_{t=1}^{n-1}\frac{Y^{t}}{n-t}} $, which gives
\[
0 =  \ex{\sum_{t=1}^{n-1}\frac{Y^{t}}{n-t}}
  =  \ex{\sum_{t=1}^{n-1}\frac{L^{t}}{n-t}}-\ex{\sum_{t=1}^{n-1}\frac{\sw{\opt^{t-1}}}{n-t+1}-\frac{\sw{\opt^{t}}}{n-t}},
\]
and since the second sum telescopes we obtain that
\begin{equation}\label{eq:Lsum}
\ex{\sum_{t=1}^{n-1}\frac{L^{t}}{n-t}}
=\frac{\sw{\opt^0}}{n}-\ex{\sw{\opt^{n-1}}}
\leq \frac{\sw{\opt}}{n}.
\end{equation}
For any $L^{t}\in\brq{0,1}$ it holds that $\frac{L^{t}}{n-t}\geq\int_{t-L^{t}}^{t}\frac{dx}{n-x}$.
Moreover all intervals $\brq{t-L^{t},t}$ are disjoint, and they are
of total length of $\sum_{t=1}^{n-1}L^{t}$, hence
\[
\sum_{t=1}^{n-1}\frac{L^{t}}{n-t}  \geq  \sum_{t=1}^{n-1}\int_{t-L^{t}}^{t}\frac{dx}{n-x}\geq\int_{0}^{\sum_{t=1}^{n-1}L^{t}}\frac{dx}{n-x}
  = 
\ln\frac{n}{n-\sum_{t=1}^{n-1}L^{t}}.
\]
Function $x\mapsto\ln\frac{n}{n-x}$ is convex, so from Jensen's inequality and~\eqref{eq:Lsum} we get that
\[
\frac{\sw{\opt}}{n}\geq\ex{\sum_{t=1}^{n-1}\frac{L^{t}}{n-t}}\geq\ex{\ln\frac{n}{n-\sum_{t=1}^{n-1}L^{t}}}\geq\ln\frac{n}{n-\ex{\sum_{t=1}^{n-1}L^{t}}},
\]
which yields
$
n\br{1-e^{-\frac{\sw{\opt}}{n}}}\geq\ex{\sum_{t=1}^{n-1}L^{t}}.
$
We can now finish lowerbounding the outcome of RSD:
\begin{multline*}
\sw{\rsd}\geq\ex{\sum_{t=0}^{n-1}\frac{\sw{\opt^{t}}}{n-t}} =\sw{\opt}-\ex{\sum_{t=1}^{n-1}L^{t}}
\geq \sw{\opt}-n+n\cdot e^{-\frac{\sw{\opt}}{n}} \geq \frac{1}{e}\frac{\sw{\opt}^2}{n},
\end{multline*}
where the last inequality follows from $x-1+e^{-x}\geq\frac{1}{e}x^{2}$
for $x\in\brq{0,1}$.
%\qed
\end{proof}

The theorem above can be easily applied to the case that agents' preferences are unit-range. 

\begin{cor}
\label{cor:unitrange}
When agents' preferences are unit-range, i.e., $\max_i v_a(i)=1$, $\min_i v_a(i)=0$, for $a\in A$, Random Serial Dictatorship $\sqrt{e\cdot n}$-approximates the maximum social welfare.
\end{cor}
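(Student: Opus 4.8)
The plan is to obtain the corollary directly from the theorem by a short case analysis on the size of $\sw{\opt}$, combined with a crude but unconditional lower bound on the performance of RSD that is available precisely because the preferences are unit-range. First I would record what the theorem already gives us, namely $\sw{\rsd}\ge\frac{1}{e}\frac{\sw{\opt}^{2}}{n}$. Then I would observe that in the unit-range setting $\ex{\sw{\rsd}}\ge 1$: the first agent in the random order is presented with all of $I$, hence picks one of his favorite items in $I$, which has value $\max_{i\in I}v_{a}(i)=1$ by the unit-range hypothesis; since the welfare of the partial matching $\sw{\rsd^{t}}$ is nondecreasing in $t$, this unit of value is retained through step $n$.

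With these two bounds in hand, I would split according to whether $\sw{\opt}\ge\sqrt{e\cdot n}$ or $\sw{\opt}<\sqrt{e\cdot n}$. In the first case,
\[
\sw{\rsd}\ \ge\ \frac{1}{e}\frac{\sw{\opt}^{2}}{n}\ \ge\ \frac{1}{e}\cdot\frac{\sw{\opt}\cdot\sqrt{e\cdot n}}{n}\ =\ \frac{\sw{\opt}}{\sqrt{e\cdot n}},
\]
so the approximation ratio is at most $\sqrt{e\cdot n}$. In the second case $\frac{\sw{\opt}}{\sqrt{e\cdot n}}<1\le\ex{\sw{\rsd}}$, so again $\sw{\opt}\le\sqrt{e\cdot n}\cdot\sw{\rsd}$. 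Combining the two cases yields the stated $\sqrt{e\cdot n}$-approximation; note that $\sqrt{e\cdot n}$ is exactly the crossover point of the two bounds, which is why no better constant comes out of this argument.

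There is no real obstacle here; the theorem does all the work. The only point requiring a moment's care is the absolute bound $\ex{\sw{\rsd}}\ge 1$ in the small-optimum regime, and in particular the fact that it uses the full unit-range assumption $\max_i v_a(i)=1$ rather than mere normalization — without it the first agent might only collect a tiny value and the case $\sw{\opt}<\sqrt{e\cdot n}$ would not go through.
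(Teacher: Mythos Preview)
Your proposal is correct and follows essentially the same argument as the paper: both combine the theorem's bound $\sw{\rsd}\ge\frac{1}{e}\frac{\sw{\opt}^{2}}{n}$ with the unit-range observation that the first agent always obtains value $1$ (so $\sw{\rsd}\ge 1$), and then balance the two. The paper writes this as $\max\bigl\{\frac{1}{\sw{\opt}},\frac{\sw{\opt}}{e\cdot n}\bigr\}\ge\frac{1}{\sqrt{e\cdot n}}$, while you do the equivalent case split at the crossover $\sw{\opt}=\sqrt{e\cdot n}$.
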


\begin{proof}
In unit-range preferences each agent has value $1$ for at least one item,
 hence RSD gets exactly welfare $1$ in the first step. Therefore, 
$\sw{\rsd}\geq 1$ and it means that the approximation ratio is at least 
$\frac{1}{\sw{\opt}}$. Since we have shown that RSD achieves 
at least $\frac{1}{e}\frac{\sw{\opt}^{2}}{n}$, the approximation ratio 
of RSD is at least $\max\brw{ \frac{1}{\sw{\opt}}, \frac{\sw{\opt}}{e\cdot n}}\geq \frac{1}{\sqrt{e\cdot n}}$.
%\qed
\end{proof}

On the hardness side, we can show that no truthful mechanism
can do significantly better.

\begin{thm}
\label{thm:hardness}

\end{thm}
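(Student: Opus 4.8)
The plan is to build not a single hard instance but a whole family of them, parametrised by which agent ``wins'' each of $k$ designated items, and to show that for any truthful mechanism the \emph{average} welfare over the family is at most $\tfrac{k^2}{n}+\epsilon$; hence some member of the family is a hard instance for that mechanism (this is what the statement asks for --- a lower-bound instance against an arbitrary truthful mechanism). Fix $n$, $k$ and a small parameter $\delta>0$ (to be set to $\epsilon/k$), and for clarity assume $k\mid n$. Split the $n$ agents into groups $C_1,\dots,C_k$ of size $n/k$ and designate $k$ items as \emph{prizes} $b_1,\dots,b_k$ (the other $n-k$ items are \emph{dummies}). The family is indexed by a tuple $\vec a=(a_1,\dots,a_k)$ with $a_j\in C_j$: in the instance $v^{\vec a}$ every agent of $C_j$ values \emph{only} $b_j$ positively, the winner $a_j$ with value $1$ and every other member of $C_j$ with value $\delta$; all remaining valuations are $0$. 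The guiding idea is that a ``shill'' of $C_j$ declares a profile that differs from a winner's only in the single number it states for $b_j$, and truthfulness will stop the mechanism from exploiting that number.

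First, $\sw{\opt(v^{\vec a})}=k$ for every $\vec a$: the only positive-weight edges lie inside the groups, each prize contributes at most $1$ to any matching, and matching $a_j\mapsto b_j$ for all $j$ (and the $n-k$ non-winners to the $n-k$ dummies) achieves exactly $k$. The core step is to bound, for an arbitrary truthful $M$, the probability it gives each prize to its winner. Fix $j$ and the winners $\vec a_{-j}$ of the other prizes, and let $\hat v=\hat v^{\vec a_{-j}}$ be the ``base'' profile in which \emph{every} agent of $C_j$ values $b_j$ at $\delta$; it does not depend on which $a\in C_j$ we pick. For each $a\in C_j$, the profile $v^{(a,\vec a_{-j})}$ arises from $\hat v$ by the single-agent deviation ``agent $a$ raises its declared value for $b_j$ from $\delta$ to $1$''. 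Since $b_j$ is the unique item agent $a$ values, its utility under any declaration is $(\text{value it declares for }b_j)\cdot\Pr[M\text{ assigns }b_j\text{ to }a]$, so truthfulness (in $\hat v$, agent $a$'s true value for $b_j$ is $\delta$ and it may not gain by declaring $1$) yields
\[
\Pr\big[\,M\text{ assigns }b_j\text{ to }a\ \text{on }v^{(a,\vec a_{-j})}\,\big]\ \le\ \Pr\big[\,M\text{ assigns }b_j\text{ to }a\ \text{on }\hat v\,\big].
\]
Summing over $a\in C_j$ and using that on $\hat v$ the item $b_j$ goes to exactly one agent, $\sum_{a\in C_j}\Pr[M\text{ assigns }b_j\text{ to }a\text{ on }v^{(a,\vec a_{-j})}]\le1$: the rule by which $M$ allocates $b_j$ among $C_j$ is blind to who the winner is, so it hits the winner with probability $\le 1/|C_j|$ on average.

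To finish, write $\Pr[b_j\!\to\!a_j]$ for the probability that $M$ assigns $b_j$ to $a_j$ on $v^{\vec a}$. On $v^{\vec a}$ only the prizes carry positive welfare, and $b_j$'s match is worth $1$ if it goes to $a_j$ and at most $\delta$ otherwise, so $\sw{M_A(v^{\vec a})}\le\sum_{j}\big(\Pr[b_j\!\to\!a_j]+\delta\big)$ in expectation. Averaging over $\vec a$ with the $a_j$ independent and uniform in $C_j$, then conditioning on $\vec a_{-j}$ and invoking the core bound for each $j$,
\[
\mathbb{E}_{\vec a}\big[\,\sw{M_A(v^{\vec a})}\,\big]\ \le\ \sum_{j=1}^{k}\Big(\frac{1}{|C_j|}+\delta\Big)\ =\ \frac{k^2}{n}+k\delta .
\]
With $\delta=\epsilon/k$, any member $v^{\vec a^{\star}}$ of the family on which $M$ does no better than the average has $\sw{\opt(v^{\vec a^{\star}})}=k$ and $\mathbb{E}[\sw{M_A(v^{\vec a^{\star}})}]\le\tfrac{k^2}{n}+\epsilon$, which proves the claim.

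I expect the real obstacle to be the core step, and specifically the \emph{design} of the family rather than the short computation in it: everything rests on arranging that moving between two instances is a genuine single-agent deviation and that the deviating agent values exactly one item, for only then does the ordinary monotonicity of truthful mechanisms sharpen into the statement ``the allocation of $b_j$ among its candidates cannot depend on which of them is the winner''. A secondary point is the case $k\nmid n$: with disjoint groups one is forced to the most balanced integer partition of $n$ into $k$ parts, for which $\sum_j 1/|C_j|$ slightly exceeds $k^2/n$, so obtaining the exact constant in full generality requires a small additional twist.
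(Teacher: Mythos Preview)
Your proof is correct and uses essentially the same construction as the paper: $k$ disjoint groups of size $n/k$, one ``prize'' item per group, a base profile in which every member of a group values the prize at $\delta$, and truthfulness to argue that an agent's probability of receiving the prize cannot go up when it unilaterally raises its declared value from $\delta$ to $1$. The only difference is that the paper, in its Lemma~\ref{lem:hardness}, picks for each group the single agent with minimum allocation probability on the base profile (pigeonhole) and declares that agent the winner, whereas you average uniformly over all choices of winners; this is a cosmetic variation, though your averaging and conditioning on $\vec a_{-j}$ handles the interaction between groups more transparently than the paper's chunk-by-chunk appeal to the lemma.
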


\begin{figure}[htbp]
\begin{center}
\subfloat{
\scalebox{0.5}{ \input{hardness.pstex_t}}
}
\caption{\label{fig:hardness}}
\end{center}
\end{figure}

Consider an instance presented in Figure~\ref{fig:hardness}. Agent $a_1$ has
value 1 for item $i_1$, and any other player $a_i$, $i=2,...,\bar{n}$, has value
$\varepsilon$ for item $i_1$, where $\epsilon$ is a small quantity. All agents have value $0$ for items $i_2,i_3,...,i_{\bar{n}}$.
Obviously assigning item $i_j$ to agent $a_j$ is an optimum assignment
and it has value $\sw{\opt}=1$.
Since we cannot distinguish between agents, we need to assign them item $i_1$
with the same probability --- this means that any truthful mechanism can not
achieve welfare better than $\frac{1}{\bar{n}}+\frac{\bar{n}-1}{\bar{n}}\varepsilon$.
This is made formal in the following Lemma.

\begin{lem}\label{lem:hardness}
There exists an instance (see Figure~\ref{fig:hardness}) such that $\sw{\opt}=1$ but any truthful mechanism cannot achieve an expected social welfare
better than $\frac{1}{\bar{n}}+\varepsilon$.
\end{lem}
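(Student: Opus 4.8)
The plan is to use truthfulness to show that an agent's chance of receiving item $i_1$ cannot depend on whether she declares value $1$ or value $\varepsilon$ for it, and then to average over the $\bar{n}$ ways of choosing which agent plays the distinguished role. Concretely, for $k\in\{1,\dots,\bar{n}\}$ let $v^k$ be the instance of Figure~\ref{fig:hardness} relabelled so that $a_k$ is the agent with value $1$ for $i_1$ (every other agent has value $\varepsilon$ for $i_1$, and all agents value $i_2,\dots,i_{\bar{n}}$ at $0$), and let $v^0$ be the instance in which \emph{every} agent has value $\varepsilon$ for $i_1$ and $0$ elsewhere. Fix an arbitrary truthful mechanism $\mathcal{M}$, and write $p_k$ for the probability that $\mathcal{M}$ assigns $i_1$ to $a_k$ on input $v^k$, and $q_k$ for the probability it assigns $i_1$ to $a_k$ on input $v^0$ (all mechanisms here may be randomized, so these are expectations).

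First I would prove $p_k=q_k$ for every $k$, which is the crux. Apply the truthfulness inequality to agent $a_k$ twice. Take $v^k$ as the true profile: if $a_k$ misreports her value for $i_1$ as $\varepsilon$ (still reporting $0$ on all other items, as is true), the declared profile becomes exactly $v^0$, and since her true value is $1$ for $i_1$ and $0$ elsewhere, truthfulness gives $p_k\ge q_k$. Now take $v^0$ as the true profile: if $a_k$ misreports her value for $i_1$ as $1$, the declared profile becomes $v^k$, and since her true value for $i_1$ is $\varepsilon>0$, truthfulness gives $\varepsilon\, q_k\ge\varepsilon\, p_k$, i.e.\ $q_k\ge p_k$; hence $p_k=q_k$. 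Since $\mathcal{M}$ always outputs a feasible matching, in every realization $i_1$ goes to exactly one agent under $v^0$, so $\sum_{k=1}^{\bar{n}}q_k=1$, and therefore $\sum_{k=1}^{\bar{n}}p_k=1$; in particular $p_{k^{\ast}}\le 1/\bar{n}$ for some $k^{\ast}$.

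It then remains to read off the welfare on $v^{k^{\ast}}$. Because $i_1$ is the only item with positive value there, feasibility gives $\sw{\mathcal{M}(v^{k^{\ast}})}=1\cdot p_{k^{\ast}}+\varepsilon\cdot(1-p_{k^{\ast}})\le 1/\bar{n}+\varepsilon$, the remaining probability $1-p_{k^{\ast}}$ of placing $i_1$ landing on the $\varepsilon$-valued agents. Meanwhile $\sw{\opt}=1$, attained by assigning $i_1$ to $a_{k^{\ast}}$ and the other items arbitrarily. Since $v^{k^{\ast}}$ is, up to relabelling of agents, the instance of Figure~\ref{fig:hardness}, this yields the claimed bound. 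The one delicate point is the two-sided use of truthfulness forcing $p_k=q_k$: both inequalities are genuinely needed, and it is essential that $\varepsilon>0$ so the second one can be divided through; the passage from $\sum q_k=1$ to picking a bad relabelling, and the final welfare computation, are just bookkeeping about feasible matchings.
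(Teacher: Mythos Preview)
Your proof is correct and follows essentially the same route as the paper: introduce the all-$\varepsilon$ profile, use feasibility there to get a pigeonhole bound on the probability some agent receives $i_1$, and transport that bound to the Figure~\ref{fig:hardness} instance via truthfulness. The only difference is that you prove the stronger equality $p_k=q_k$ by invoking truthfulness in both directions, whereas the paper (and your argument, in fact) only needs the one-sided inequality $p_k\le q_k$: from $\sum_k q_k=1$ pick $k^\ast$ with $q_{k^\ast}\le 1/\bar n$, and then $p_{k^\ast}\le q_{k^\ast}$ already suffices for the welfare bound.
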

\begin{proof}
Let us consider the first instance as follows
\[
v^1(a,i)=\begin{cases}
0 & \mbox{if }1\leq a\leq n,2\leq i\leq \bar{n}\\
\epsilon & \mbox{if }1\leq a\leq \bar{n},i=1
\end{cases}
\]
where $\epsilon$ is a small quantity. In this case, consider any mechanism, it is cleat that there exists an agent who obtains item $1$ with a probability at most $\frac{1}{\bar{n}}$. Without loss of generality, we assume that agent $1$ is such an agent. Now let us consider the second instance in Figure~\ref{fig:hardness}. 
\[
v^2(a,i)=\begin{cases}
0 & \mbox{if }1\leq a\leq n,2\leq i\leq \bar{n}\\
\epsilon & \mbox{if }2\leq a\leq \bar{n},i=1\\
1 & \mbox{if }a=1,i=1
\end{cases}
\]
The optimal social welfare is $1$ by assigning item $1$ to the first agent. It is also easy to see that any mechanism that achieves an approximation ratio better than $O(\frac{1}{\bar{n}})$ must  allocate item 1 to agent 1 with a probability larger than $\frac{1}{\bar{n}}$. It implies that, under any truthful mechanism with an approximation ratio better than $O(\frac{1}{\bar{n}})$, agent $1$ in the first instance could benefit by misreporting his values as in the second instance. This proves that no truthful mechanism could achieve an expected social welfare better than $\Omega(\frac{\sw{\opt}}{\bar{n}})$ in the second instance where $\sw{\opt}=1$. 
\end{proof}

Using $k$ copies of this instance in Figure~\ref{fig:hardness}, we can prove Theorem~\ref{thm:hardness}.

%\noindent
\emph{Proof of Theorem~\ref{thm:hardness}.}
For simplicity let us assume that $k$ divides $n$.
Consider now the following instance with $n$ agents and $n$ items. We divide
agents and items into $k$ chunks, each consisting of $\frac{n}{k}$
 agents and the same number of items. Each chunk looks exactly like the
instance from Figure~\ref{fig:hardness} where $\bar{n}=\frac{n}{k}$ and $\varepsilon = \epsilon/k$.
Agents have value 0 for items from different chunks. Therefore social welfare
of any mechanism is a sum of welfares in all chunks. From
Lemma~\ref{lem:hardness} we know that on each chunk any truthful
mechanism gets an expected social welfare of at most $\frac{k}{n}+\varepsilon$.
Since there are $k$ chunks, no truthful mechanism can get
an expected social welfare on the whole instance better than
$k\cdot \br{\frac{k}{n} +\varepsilon}= \frac{k^2}{n} +\epsilon$.
On the other hand, each of the $k$ chunks contributes 1 to
the optimal welfare, giving $\sw{\opt}=k$.
This concludes the proof.
\qed

\section{Open question}
As mentioned in the introduction, we can give the following truthful and symmetric mechanisms that outputs optimal social welfare.
The mechanism works as follows. First, collect agents preferences $d_a$ for all $a\in A$. Then consider graph $G=(A,I)$ with edge between every pair $a \in A$, $i \in I$ for which $d_a(i)=1$..
Next, find the all maximum matchings.
Finally, output a maximum matching uniformly at random.
\begin{cl}
The mechanism above is truthful and symmetric, and outputs optimal social welfare.
\end{cl}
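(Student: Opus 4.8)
The plan is to verify the three claimed properties — symmetry, truthfulness, and optimality — essentially directly from the definitions, since the mechanism is ``output a uniformly random maximum matching in the $1$-valued edge graph''. Optimality is immediate: with dichotomous preferences the social welfare of any matching $p_A$ equals the number of edges $(a,i)$ in $p_A$ with $v_a(i)=1$, so $\sw{\opt}$ equals the size of a maximum matching in $G=(A\cup I, \{(a,i):d_a(i)=1\})$ when agents report truthfully; any matching the mechanism outputs is a maximum matching of $G$ and hence has welfare exactly $\sw{\opt}$. I would state this as the first step.

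For symmetry, suppose $d_{a_1}=d_{a_2}$. Swapping the roles of $a_1$ and $a_2$ is a graph automorphism of $G$ (it fixes $G$ as a set of edges, since $a_1$ and $a_2$ have the same neighborhoods), hence induces a bijection on the set $\mathcal{M}^*$ of maximum matchings that interchanges ``$a_1$ gets item $i$'' with ``$a_2$ gets item $i$''. Since the mechanism draws uniformly from $\mathcal{M}^*$, this bijection shows $\ex{\mathcal{M}_{a_1}(d_A)}=\ex{\mathcal{M}_{a_2}(d_A)}$, which is exactly the definition of symmetry. I would present this as a short averaging/bijection argument.

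Truthfulness is the real content. Fix agent $a$ with true profile $v_a$, and let $d_a$ be any declaration; write $G$ for the true graph (using $v_a$ for $a$'s edges and $v_{a'}$ for the others) and $G'$ for the reported graph (using $d_a$). The value $v_a\cdot\mathcal{M}_a(d_A)$ is the probability, over a uniformly random maximum matching $M'$ of $G'$, that $a$ is matched in $M'$ to an item $i$ with $v_a(i)=1$; call this $q$. The honest value $v_a\cdot\mathcal{M}_a(v_A)$ is the probability $p$ that $a$ is matched to a $1$-valued item in a uniformly random maximum matching $M$ of $G$. The goal is $p\ge q$. The key structural facts I would invoke are (i) all maximum matchings in a bipartite graph cover the same set of vertices on each side (this follows from König/Dulmage–Mendelsohn, or directly: the symmetric difference of two maximum matchings is a union of alternating cycles and even paths, so no vertex is exposed in one and covered in the other); and (ii) adding or deleting edges incident to a single vertex $a$ changes the maximum matching size by at most one and in a controlled way. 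The hardest case — and I expect this to be the main obstacle — is when $d_a$ is a strict superset of the $1$-valued items, so that $G'$ may have strictly larger maximum matchings than $G$ restricted to honest edges; one must argue that the extra edges the mechanism might use for $a$ are precisely the $0$-valued ones, so reporting them can only dilute $a$'s probability of getting a $1$-valued item. Concretely: let $N_1$ be $a$'s $1$-valued neighbors and $N_0$ its reported-but-$0$-valued neighbors. If in $G$ every maximum matching matches $a$ (to some item of $N_1$), then $p=1\ge q$. Otherwise some maximum matching of $G$ exposes $a$; I would show that in $G'$ a random maximum matching either exposes $a$ or matches it into $N_0$ with positive probability unless $N_0$ is ``useless'', and a careful counting/injection between the maximum-matching sets of $G$ and $G'$ gives $q\le p$. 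The analogous (easier) direction handles the case where $d_a$ omits some $1$-valued items. I would organize the proof around the partition of maximum matchings by where $a$ is matched, and use the augmenting-path characterization to transfer matchings between $G$ and $G'$; the counting inequality comparing the two uniform distributions is where the care is needed, and I would flag it as the crux rather than grind through every subcase here.
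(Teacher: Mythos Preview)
Your optimality and symmetry arguments are fine and match the paper's (your automorphism phrasing for symmetry is a bit more explicit, but it is the same idea).

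For truthfulness, however, there is a genuine gap. Your ``key structural fact (i)'' is false: it is \emph{not} true that all maximum matchings of a bipartite graph cover the same vertex set. Take the path $a_1\,\text{--}\,b\,\text{--}\,a_2$; the two maximum matchings are $\{a_1 b\}$ and $\{a_2 b\}$, and $a_1$ is covered by one and exposed by the other. The symmetric difference of two maximum matchings does consist of alternating cycles and even-length alternating paths, but an even-length path has two endpoints that are covered by \emph{different} matchings, not by both. Since you list (i) as a fact you ``would invoke'' and then leave the actual counting/injection as a flagged crux, the plan as written does not go through.

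The paper's argument is both different and simpler. Instead of comparing the honest profile to an arbitrary deviation in one shot, it flips a \emph{single} declared bit $d_a(i)$ and tracks three integers: $M_A$ (the number of maximum matchings), $M_a^1$ (those in which $a$ receives a truly $1$-valued item), and $M_{(a,i)}$ (those using the edge $(a,i)$). If $a$ adds a $0$-valued edge $(a,i)$: either the maximum size jumps by one, in which case every maximum matching of the new graph must use $(a,i)$ (else it would already be a too-large matching in the old graph), so $a$ always gets a $0$-valued item; or the size is unchanged, in which case $M_A$ becomes $M_A+M_{(a,i)}$ while $M_a^1$ is unchanged, so $a$'s success probability drops from $M_a^1/M_A$ to $M_a^1/(M_A+M_{(a,i)})$. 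If $a$ deletes a $1$-valued edge $(a,i)$ and the size stays the same, both numerator and denominator drop by $M_{(a,i)}$, and $\frac{M_a^1-M_{(a,i)}}{M_A-M_{(a,i)}}\le\frac{M_a^1}{M_A}$ since $M_a^1\le M_A$. Iterating these single-edge flips handles any deviation. This elementary counting bypasses any global structural statement about which vertices are matched, and in particular never needs the (false) claim you labeled (i); it also turns the step you flagged as ``where the care is needed'' into a one-line fraction inequality.
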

\begin{proof}
The symmetry and optimality of the mechanism is easy to see since it outputs one of the maximum matching uniformly at random. The following shows that the mechanism is also truthful. 

Let $d_a$ be the declared preference profile of agent $a$, and let $d_{-a}$ be declarations of all agents but $a$. Consider item $i$ which $a$ values 0, and suppose $a$ declares $d_a\br{i}=0$.
Let $M_{A}$ be the number of all maximum matchings,
let $M^1_a$ be the number of matchings in which $a$ is assigned item he 1-values.
Therefore expected value of $a$'s assignment is $\frac{M^1_a}{M_{A}}$.
Suppose now that $a$ would declare $d_a\br{i}=1$ instead.
There are two situations: with this change the size of maximum matching has increased by one, or remained the same.
If the size increased by $1$, then it means that right now all matchings use edge $\br{a,i}$, and in this situation $a$ is always assigned
item $i$, which he 0-values.
Hence, he does not have incentive to misreport in this case.
If the size of maximum matching remained the same, then the total number of matchings
could only increase (or remain the same) and now is equal to $M_A+M_{\br{a,i}}$.
However, the number of matchings in which $a$ is assigned 1-valued item, remains the same: $M^1_a$.
Therefore, after misreporting value of $i$, agent $a$ has probability of receiving 1-valued item equal to $\frac{M^1_a}{M_A+M_{\br{a,i}}}\leq \frac{M^1_a}{M_{A}}$,
Hence, $a$ does not have incentive to misreport in this case either.

Consider the other situation. Let $i$ be an item which $a$ values 1, and suppose $a$ declares $d_a\br{i}=1$.
As before, let $M_A$ be the total number of matchings,
let $M^1_a$, be the number of matchings in which $a$ is assigned item he 1-values.
Now the probability that $a$ is matched to 1-valued item is equal to $\frac{M^1_a}{M_A}$.
Suppose that $a$ declares $d_a\br{i}=0$.
After $a$ has changed his declaration, we have two possibilities: size of maximum matching has decreased by one, or remained the same.
If the size has decreased by one, then it means that $a$ is not assigned anymore to any item, so he gets value of 0 in this case, and hence he does not have any incentive to lie.
If the size has remained the same, then the total number of matchings is now equal to $M_A-M_{\br{a,i}}$.
But the number of matchings in which $a$ was assigned 1-value item, decreases by the same amount, i.e., $M^1_a-M_{\br{a,i}}$ is now
the number of matchings from which $a$ benefits value 1.
Therefore, the probability of receiving 1-valued item is now equal to $\frac{M^1_a-M_{\br{a,i}}}{M_A-M_{\br{a,i}}}$, and
\[
 \frac{M^1_a-M_{\br{a,i}}}{M_A-M_{\br{a,i}}} \leq \frac{M^1_a}{M_A},
\]
for any $M_{\br{a,i}}$. Hence $a$ does not have incentive to misreport in this case either.
\end{proof}

Unfortunately, such a mechanism is not feasible when computational efficiency is required.
The problem is that it is $\#P$-complete to count all maximum matchings.
Therefore, we suspect that any truthful, symmetric and optimal mechanism
would be somehow connected with an algorithm for counting all maximum matchings.
And because of that, we conjecture that such mechanism should be $\#P$-complete as well.

\end{document}